\newcommand{\Reals}{\mathbb{R}}
\newcommand{\p}{\mathbb{P}}
\newcommand{\majority}{\text{majority}}
\newcommand{\minority}{\text{minority}}
\newcommand{\doop}[1]{\text{do}(#1)}
\newtheorem{theorem}{Theorem}
\newtheorem{definition}{Definition}
\newtheorem{lemma}{Lemma}
\newtheorem{corollary}{Corollary}
\newtheorem{hypothesis}{Hypothesis}
\newtheorem{test}{Empirical Test}
\begin{document}

\title{Fairness Is More Than Algorithms: Racial Disparities in Time-to-Recidivism}

\author{Jessy Xinyi Han}
\affiliation{%
  \institution{Massachusetts Institute of Technology}
  \city{Cambridge, MA}
  \country{USA}
}
\email{xyhan@mit.edu}

\author{Kristjan Greenewald}
\affiliation{%
  \institution{IBM Research}
  \city{Cambridge, MA}
  \country{USA}}

\author{Devavrat Shah}
\affiliation{%
  \institution{Massachusetts Institute of Technology}
  \city{Cambridge, MA}
  \country{USA}
}


\begin{abstract}

  Racial disparities in recidivism remain a persistent challenge within the criminal justice system, increasingly exacerbated by the adoption of algorithmic risk assessment tools for decision making. Past works have primarily focused on understanding the bias induced by algorithmic tools, viewing recidivism as a binary outcome—i.e., reoffending or not. Limited attention has been given to the role of non-algorithmic factors (including socioeconomic ones) in driving the racial disparities in recidivism from a systemic perspective. Towards that end, this work presents a multi-stage causal framework to investigate the advent and extent of racial disparities by considering the time-to-recidivism rather than a simple binary outcome. The framework captures the interactions between races, the risk assessment algorithm, and contextual factors in general. This work introduces the notion of counterfactual racial disparity and offers a formal test using survival analysis that can be conducted with observational data to understand whether potential differences in recidivism rates among racial groups arise from algorithmic bias, contextual factors, or their interplay. In particular, it is formally established that if sufficient statistical evidence for differences in recidivism across racial groups is observed, it would support rejecting the null hypothesis that non-algorithmic factors (including socioeconomic ones) do not affect recidivism. An empirical study applying this framework to the COMPAS dataset reveals that short-term recidivism patterns do not exhibit racial disparities when controlling for risk scores. However, statistically significant disparities emerge with a longer follow-up period, particularly for low-risk groups. This suggests that factors beyond the algorithmic scores–possibly including structural disparities in housing, employment, and social support–may accumulate and exacerbate recidivism risks over time. Indeed, the use of survival analysis enables such nuanced analysis. This empirical analysis underscores the need for holistic policy interventions extending beyond algorithmic improvements to address the broader influences on recidivism trajectories.
\end{abstract}

\begin{CCSXML}
<ccs2012>
   <concept>
       <concept_id>10010405.10010455.10010461</concept_id>
       <concept_desc>Applied computing~Sociology</concept_desc>
       <concept_significance>500</concept_significance>
       </concept>
   <concept>
       <concept_id>10002950.10003648.10003688.10003694</concept_id>
       <concept_desc>Mathematics of computing~Survival analysis</concept_desc>
       <concept_significance>500</concept_significance>
       </concept>
   <concept>
       <concept_id>10010147.10010178.10010187.10010192</concept_id>
       <concept_desc>Computing methodologies~Causal reasoning and diagnostics</concept_desc>
       <concept_significance>300</concept_significance>
       </concept>
   <concept>
       <concept_id>10003456.10010927.10003611</concept_id>
       <concept_desc>Social and professional topics~Race and ethnicity</concept_desc>
       <concept_significance>500</concept_significance>
       </concept>
   <concept>
       <concept_id>10003456.10003462.10003588</concept_id>
       <concept_desc>Social and professional topics~Government technology policy</concept_desc>
       <concept_significance>500</concept_significance>
       </concept>
   <concept>
       <concept_id>10003456.10003457.10003580.10003543</concept_id>
       <concept_desc>Social and professional topics~Codes of ethics</concept_desc>
       <concept_significance>500</concept_significance>
       </concept>
 </ccs2012>
\end{CCSXML}

\ccsdesc[500]{Applied computing~Sociology}
\ccsdesc[500]{Mathematics of computing~Survival analysis}
\ccsdesc[300]{Computing methodologies~Causal reasoning and diagnostics}
\ccsdesc[500]{Social and professional topics~Race and ethnicity}
\ccsdesc[500]{Social and professional topics~Government technology policy}
\ccsdesc[500]{Social and professional topics~Codes of ethics}

\keywords{Do, Not, Us, This, Code, Put, the, Correct, Terms, for,
  Your, Paper}

\received{20 February 2007}
\received[revised]{12 March 2009}
\received[accepted]{5 June 2009}

\maketitle
\section{Introduction}\label{sec:introduction}
With millions of formerly incarcerated people returning to prisons each year, recidivism—the cycle of re-offending following release from incarceration—remains a pressing challenge worldwide. In the United States, recidivism is a particularly complex issue closely entwined with the stark racial, economic, and social inequalities that permeate the criminal justice system. The emerging literature suggests that minority groups seem to face disparate treatment across various stages of the criminal justice process–from the initial 911 call for service \cite{Han_Miller_Watkins_Winship_Christia_Shah_2024}, through policing \cite{fryerEmpiricalAnalysisRacial2019a}, court sentencing \cite{rehaviRacialDisparityFederal2014}, probation and parole decisions \cite{kimchiInvestigatingAssignmentProbation2019, huebnerRoleRaceEthnicity2008}, and re-entry support \cite{westernRacializedReentryLabor2019}. 
Therefore, a close-up examination of the pathways and extent of such racial disparities must precede any effective reforms for a fair 
and equitable criminal justice system.

Amid these systemic challenges, the increasing deployment of algorithmic risk assessment tools, such as COMPAS (Correctional Offender Management Profiling for Alternative Sanctions) \cite{northpointe1996compas}, has added an additional layer of complexity to the discussion of fairness and equity in criminal justice. These tools, originally designed to standardize and reduce human biases in bail, parole, and sentencing decisions and enhance consistency and efficiency in judicial outcomes, have been receiving continual scrutiny and criticism since adoption \cite{berkFairnessCriminalJustice2021, barabas2018interventions, corbett2017algorithmic}. In particular, ProPublica’s influential news report on machine bias \cite{angwinMachineBias2016}, which  analyzed the impact of the COMPAS risk assessment tool \cite{larsonHowWeAnalyzed2016}, suggested racial disparities in terms of predictive accuracy: African American individuals who did not recidivate within a two-year period were disproportionately labeled as higher risk compared to their Caucasian counterparts. Although ProPublica's emphasis on \textit{equalized odds} is an important effort to discern the static impacts of biases within algorithmic decision-making on recidivism, it does not fully account for the broader structural context or potential non-algorithmic factors—such as socioeconomic conditions—that may also contribute to disparities in the outcomes of recidivism. Moreover, it leaves out the time trajectory of how potential bias propagates and flows through the pathways embedded in the criminal justice system over time. 

\subsection{Contributions}

The primary contribution of this work is to provide a systematic approach to answer the following question: to what extent do racial disparities in recidivism, often attributed to algorithmic bias in risk assessment tools, actually stem from broader contextual factors? In answering this question, the key challenge lies in disentangling the interactions between algorithmic decisions, perceived race, and additional contextual factors over time. This work overcomes the limitations of prior work towards addressing this question through the contributions summarized next.  

We propose a multi-stage causal framework that captures the complete trajectory from arrest to potential re-offense or return to custody. This allows us to examine both direct and indirect pathways through which racial disparities can manifest over time. Notably, while we understand that contextual factors, such as access to housing, employment, or social support networks, are often unobserved, we assume algorithmic risk assessment decisions serve as potentially biased yet fully informative proxies for observable information like demographic characteristics and prior crime histories.\footnote{The definition of bias here can be flexible enough to serve specific purposes. One notion could be taken from \textit{disparate treatment} where the risk assessment algorithms directly use race and other protected attributes as inputs to make decisions.} In other words, given fixed contexts and algorithmic decisions, race itself does not make someone recidivate sooner or later.

Building on this framework, we introduce the notion of counterfactual racial (dis-)parity, a fairness criterion that examines whether individuals of different races—but otherwise identical in every other respect—exhibit equivalent time-to-recidivism patterns under the influence of the criminal justice system and contextual factors. We move beyond static measures of fairness that treat criminal justice outcomes as a simple True or False binary predictive question to consider, through the lens of survival analysis, the dynamic and context-dependent nature of recidivism affected by structural inequality over time.

To assess whether observed disparities in recidivism are driven primarily by algorithmic predictions only, or by additional factors as well, we arrive at Theorem \ref{lemma:calculation} and Lemma \ref{lemma:test1} to formulate a data-driven test around the recidivism curves of different racial groups with the same risk assessment score group. The challenge lies in the fact that the true time-to-recidivism is often masked by censoring, as individuals may not re-offend before returning to custody for non-criminal reasons. This means that the data only reveals the time to either recidivism or the censoring event, whichever occurs first, making it difficult to directly observe the true underlying time-to-recidivism. To address this, we leverage the log-rank test from survival analysis, which accounts for censored data, to provide a formal empirical test using observational data. 
Specifically, if sufficient statistical evidence is found supporting that the recidivism curves of different races are different, we reject the null hypothesis that the additional contextual factors do not directly affect time-to-recidivism, i.e. non-algorithmic contexts such as socioeconomic factors are non-trivially impacting the racially disparate outcomes.

We utilize this framework to analyze the COMPAS dataset curated by ProPublica. Within a short-term follow-up period of up to seven months, we do not find sufficient evidence to support the claim that recidivism patterns across racial groups are different. This suggests that there is limited or no influence of additional contextual factors within this time frame in terms of impact on recidivism across races.  

However, disparities become significant with follow-up periods exceeding seven months, particularly for individuals categorized as low risk by the risk assessment algorithm, thereby rejecting the null hypothesis that algorithmic bias alone fully accounts for the observations and contextual factors do not directly affect time-to-recidivism. We propose one plausible explanation for these findings, which is structural inequalities in socioeconomic conditions, including disparities in access to stable housing, employment, and social support, may exert a cumulative and compounding influence over time, extending beyond the scope of algorithmic predictions. We thus advocate for comprehensive policy interventions that address the broader socioeconomic determinants of recidivism.

\subsection{Organization} 

The rest of the paper is organized as follows. 
In Section 2, we talk about related works focusing on relevant themes, including algorithmic fairness and bias in risk assessment, counterfactual fairness frameworks, and recidivism and the criminal justice system at large. 
In Section 3, we lay out the theoretic foundation of a multi-stage causal framework to understand the pathways of racial disparities in recidivism. We introduce a data-driven test for understanding the extent to which additional contextual factors, instead of only the algorithmic ones, influence different races differently in terms of their recidivism profiles. We establish its correctness, formally.  
In Section 4, we conduct an empirical study by applying our formal framework to the COMPAS dataset. 
In Section 5, we conclude and discuss what could be the potential contextual factors and what policy reform can be done to combat systemic racism.

\section{Related Works}\label{sec:related works}

The intersection of fairness and algorithmic risk assessments within the criminal justice system has drawn considerable research attention and public scrutiny. The increasing adoption of algorithmic tools such as the COMPAS risk assessment system has exhibited both the hope for data-driven decision-making and the inherent risks of embedding and amplifying existing biases. This section provides a comprehensive overview of related literature to our study, ranging from foundational analyses of algorithmic fairness to studies on recidivism disparities and causal fairness frameworks.

\subsection{Algorithmic Fairness and Bias in Risk Assessments}

The influential ProPublica investigation demonstrates the substantial racial disparities in the COMPAS algorithm's predictions, finding that Black defendants were more likely than White defendants to be falsely classified as high risk for recidivism despite similar reoffending rates \cite{angwinMachineBias2016, larsonHowWeAnalyzed2016}. Rigorously speaking, their main findings only test how different is \textit{a variant} of the two races' actual false positive rate and false negative rate. 

Mathematically, the comparison of the actual false positive rate and false negative rate is defined as
\begin{align*}
	\mathbb{P}(M\in\text{\{medium, high\}} | D=\majority, \tau>2) &\stackrel{>}{<} 
\mathbb{P}(M\in\text{\{medium, high\}} | D=\minority, \tau>2)\\
\mathbb{P}(M\in\text{\{low\}} | D=\majority, \tau\leq2) &\stackrel{>}{<}
\mathbb{P}(M\in\text{\{low\}} | D=\minority, \tau\leq2)
\end{align*}
where $M\in\text{\{low, medium, high\}}$ denotes the algorithmic risk assessment decision, $D\in\{\majority, \minority\}$ denotes the race, and $\tau$ denotes the actual time to recidivism. However, the true time to recidivism is often masked by the time to return to custody for non-criminal violations, meaning if returning to custody happens first, then we only observe the minimum of the two, time to return to custody, instead of the target time to recidivism. This is referred to as the right-censoring problem in survival analysis, requiring more careful time-to-event examination.

 This work has also sparked intense debate over using \textit{equalized odds} in criminal justice settings \cite{Rudin2020Ageof, floresFalsePositivesFalse2016a}. \citep{dieterichCOMPASRiskScales2016} and subsequent responses defended COMPAS's \textit{predictive parity}, i.e., $\mathbb{P}(\tau \leq 2 | D = \majority, M\in\text{\{low\}}) \simeq \mathbb{P}(\tau \leq 2 | D = \minority, M\in\text{\{low\}})$, arguing that its design and operational goals inherently prioritized predictive consistency and accuracy, not necessarily equity. In fact, as shown by \citep{chouldechovaFairPredictionDisparate2016}, so long as the base rate of the two populations differs, i.e.,  $\mathbb{P}(\tau \leq 2 | D = \majority) \neq \mathbb{P}(\tau \leq 2 | D = \minority)$, \textit{equalized odds} and \textit{predictive parity} cannot hold simultaneously for any non-trivial not-perfect classifier.

\subsection{Causal Inference and Fairness Frameworks}
More recent works also shift from associational fairness to arming with a causal perspective, marking a critical advancement in understanding and mitigating systemic biases. \citep{costonCounterfactualRiskAssessments2020} explored counterfactual fairness notions across different demographic groups, emphasizing the importance of understanding causal pathways that may drive disparities. Similarly, \citep{mishlerFairnessRiskAssessment2021b} proposed post-processing methods for achieving counterfactual equalized odds within algorithmic frameworks.
More systematically, \citep{pleckoCausalFairnessAnalysis2024} introduced a comprehensive causal toolkit for fairness analysis, highlighting the need to disentangle causal effects from correlations to identify and rectify disparities in algorithmic decision-making. Our study builds along this line of work by situating recidivism disparities within a causal context, adopting a counterfactual framework to understand the effect of algorithmic and additional contextual factors on recidivism over time. 

\subsection{Recidivism and Societal Contexts}

Recidivism is shaped by a complex interplay of individual, community, and systemic factors, including socioeconomic conditions, neighborhood characteristics, and access to essential resources. \citep{jacobsNeighborhoodRiskFactors2021} emphasized the effects of  neighborhood risk factors on recidivism, showing that marginalized communities face disproportionate risks. \citep{okonofuaScalableEmpathicSupervision2021} explored the impact of empathic supervision on recidivism reduction, emphasizing the importance of supportive interventions. \citep{goodleyPredictorsRecidivismFollowing2022} conducted a meta-analysis identifying key predictors of recidivism and their differential effects across demographic groups.

Research employing survival analysis techniques has further deepened our understanding of recidivism dynamics. \citep{jungRecidivismSurvivalTime2010} investigated racial disparities in survival times among formerly incarcerated individuals, illustrating how temporal aspects of recidivism can reveal inequities that static metrics often obscure. \cite{pmlr-v219-do23a} and \cite{10.5555/3722577.3722823} extend fairness analysis to general survival settings beyond recidivism, using mutual information minimization and distributionally robust optimization, respectively, to mitigate disparities in time-to-event predictions. 

\subsection{Critiques and Practical Implications}

While debiasing algorithmic risk assessment tools have long been the research focus, \citep{baoItCOMPASlicatedMessy2022} cautions against `horse-race' analyses arguing the superiority performance of certain de-biasing algorithms or fairness notions. Instead, more efforts should be devoted to thoroughly understanding the rich domain foundations and implications of criminal justice, recidivism, and risk assessment tools, which is essentially the main focus of our research.

\section{Unpacking Racial (Dis)parities in Recidivism: A Causal Framework}\label{sec:framework}

Recidivism is a complex and systemic issue, influenced by social, economic, and institutional factors. To understand the advent and extent of racial disparities in recidivism of individuals with comparable risk assessment profiles, we propose a multi-stage causal framework that delineates the full trajectory - from arrest to potential reoffense or return to custody. By leveraging this framework, we examine how racial disparities emerge and potentially compound throughout different stages of the criminal justice process, with particular attention to the role of algorithmic risk assessments and additional contextual factors.

\subsection{Framework} 

\begin{figure}
    \centering
    \includegraphics[width=0.6\linewidth]{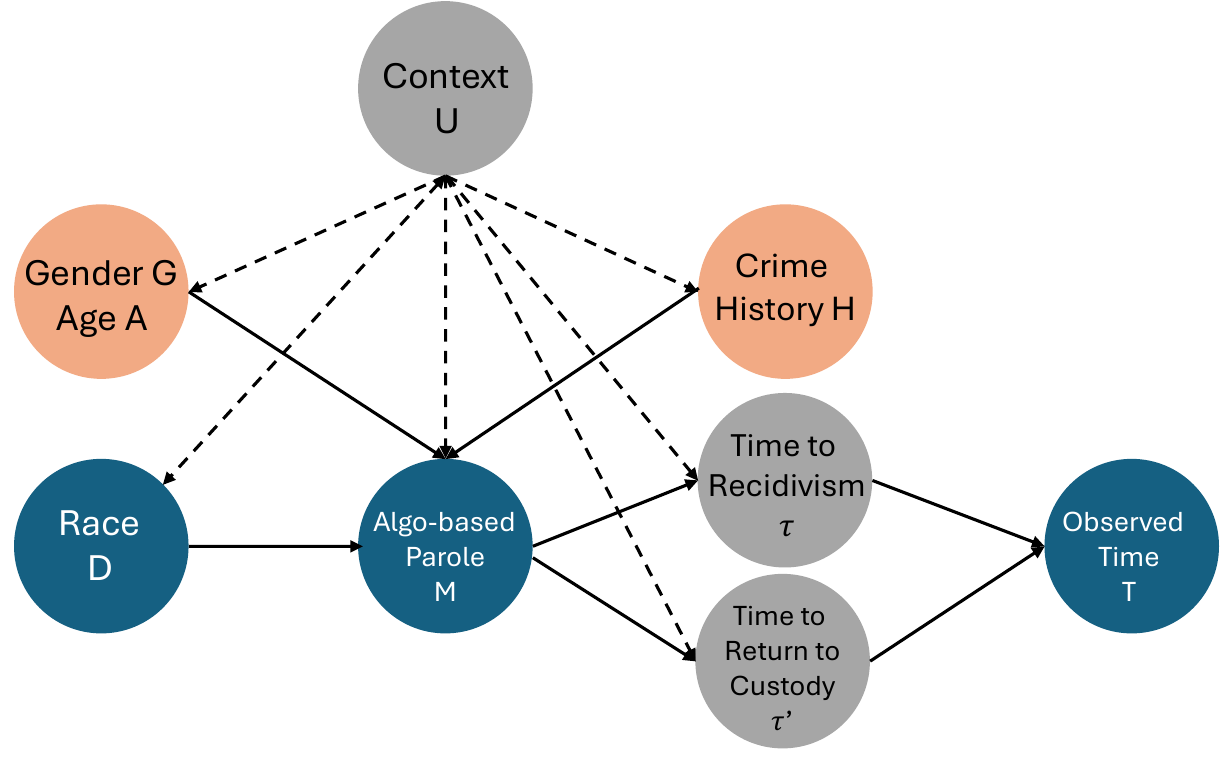}
    \caption{A causal DAG corresponding to the multi-stage recidivism process.}
    \label{fig:dag}
\end{figure}

We consider a cohort of arrested individuals subject to the COMPAS risk assessment tool for predicting recidivism risk. These individuals undergo varied pre-trial detaining decisions and subsequent sentencing outcomes. Upon release, individuals face three possible outcomes: they may successfully reintegrate without further criminal activity, they may reoffend and be rearrested, or they may return to custody for non-criminal violations such as probation breaches. We define recidivism as our target event, measured as the time from release to rearrest. For individuals who return to custody for non-criminal reasons, we treat this return as a censoring event that masks the potential occurrence of recidivism. We formalize this multi-stage process through a natural causal framework represented by a causal Directed Acyclic Graph (DAG) in Figure \ref{fig:dag}. This structured causal framework enables us to systematically evaluate how perceived race interacts with the risk assessment algorithms, institutional decisions, and broader contexts over time.\\

\noindent
{\bf Arrested Individual.} Let $D \in \{\majority, \minority\}$ denote the race of the arrested individual.\footnote{Although extensive literature underscores the socially constructed nature of racial categories \citep[]{sen_race_2016}, the data constraints in large-scale recidivism studies make it challenging to adopt a fully constructivist approach. As a consequence, we follow the convention of much of the causal criminology literature, which often employs a non-constructivist perspective to align with existing studies and ensure comparability. 
}

\noindent
{\bf Algorithm-based Decision.} The criminal justice system uses algorithmic risk scores to inform decisions about bail, parole, and probation, potentially shaping an individual's post-release trajectory. We use $M \in \{\text{low, medium, high}\}$, the assigned risk score category, as a proxy for the algorithm-based criminal justice system decisions. We assume such risk assessment scores are fully informative (but likely biased) characterization of demographic features like race $D$, age, gender, crime history and other contextual background information.

\noindent
{\bf Recidivism or Returning to Custody.} Upon release, the individuals are followed up till they re-offend and are rearrested, they return to custody for non-criminal violations or the follow-up period ends, whichever comes first. Specifically, recidivism is the target event and returning to custody is the censoring event. We denote by $\tau$ the true time to recidivism, potentially unobserved in certain cases if masked by time to return to custody $\tau'$. $T$ is the observed time, determined entirely by $\tau$ and $\tau'$, i.e. $T = \min\{\tau, \tau'\}$.

\noindent

\noindent{\bf Context.} Socioeconomic conditions and other contextual factors $U$ may influence multiple variables in our framework: the individual's race $D$, demographic characteristics, algorithm-based criminal justice system decision $M$, time to recidivism $\tau$, and time to return to custody $\tau'$. However, the context information is generally exogenous. Note that we adopt the dashed bidirectional arrow notation ($\leftarrow - - 
\rightarrow$) as in \cite{pleckoCausalFairnessAnalysis2024} between protected attributes $D$ and the context $U$ to denote the associational relationship, instead of a causal one.

\noindent{\bf Causal Mechanism.} The causal relationship within this framework is governed by the interactions between context $U$, race $D$, algorithm-based criminal justice system decision $M$, and timing variables $\tau$, $\tau'$, and $T$. As noted, the context $U$ may influence $D, M$, $\tau$, and $\tau'$. 
The algorithm-based decision $M$, capturing race $D$, gender, age, and crime history, may affect the observed time $T$ through the potential time-to-recidivism $\tau$ and time-to-custody $\tau'$. The underlying model is represented in a causal Directed Acyclic Graph (DAG) in Figure \ref{fig:dag}. A key assumption encoded in this causal mechanism is that there is {\em no direct arrow} between $D$ and $\tau, \tau'$. In words, this encodes 

\begin{center}
    {\em given the societal
{\em context} and a fully informative (but likely biased) proxy algorithm-based criminal justice system decision, \\race does not make someone recidivate sooner or later.}
\end{center} 

\subsection{Definition}\label{ssec:parity}
Given our causal framework, we now formalize the notion of racial parity by examining how the {\em intervention} of race $D$ affects time-to-recidivism under varying contexts.\footnote{The notion of manipulating a person's race is inherently controversial. One potential way to address this is by focusing on perceptions of race rather than treating it as an immutable trait \citep[]{Greiner_Rubin_2011}. Thus, in this work, we model a scenario where two individuals in the same context are identical except for the perception of their races and understand whether this leads to different recidivism outcomes.} Specifically, adopting the standard causal notation \citep[][]{Pearl_2014}, we define counterfactual racial parity among individuals in specific algorithm-based decision groups.

\begin{definition}[Counterfactual Racial Parity]\label{def:1}
The criminal justice system exhibits {\em counterfactual racial parity} if $\forall u \in \Reals^l$, $\forall m \in \{\text{low, medium, high}\}$,
\begin{align}
\p^{\doop{D=\majority}}[ \tau>t | M=m] & = \p^{\doop{D=\minority}}[\tau>t| M=m]\label{eq:parity.crime.do}.
\end{align}
\end{definition}

\noindent Here, $\p^{\doop{D = d}}(\cdot)$ represents the probability distribution under an intervention that sets race $D = d$ for $d \in {\majority, \minority}$. 

In other words, this definition operationalizes a thought experiment: consider two individuals who are identical in every respect except their perceived race - would they receive similar treatment by the criminal justice institutions and the broader context factors? Would they experience similar recidivism trajectories under the same algorithmic criminal justice system decision group? This allows us to define and test the idea of counterfactual fairness, where fairness is achieved if the individual's counterfactual recidivism outcome is the same regardless of their perceived race. More importantly, by examining this through a time-to-event lens, we can identify not just whether racial disparities exist, but when they emerge, allowing us to better understand the temporal dynamics of how additional context (e.g. socioeconomic) factors potentially shape recidivism outcomes over time \footnote{It is important to note that there are various definitions of fairness in the literature, each suited to different contexts \citep{10.1145/3194770.3194776, 10.5555/3294771.3294834, 10.5555/3294996.3295162}. ProPublica focuses on \textit{predictive parity} while the notion of parity we adopt here—derived from \textit{outcome parity}—is designed to assess fairness in the time-to-recidivism outcomes across racial groups \citep{NIPS2016_9d268236}.}.

\subsection{Hypothesis} \label{sec:compare_fairness}

Having established the multi-stage causal framework and ideal goal of achieving counterfactual racial parity, we face a key challenge: the unobservability of  general contextual factors $U$, which may create spurious associations between race $D$ and time-to-recidivism $\tau$ and make it difficult to directly assess whether counterfactual racial parity holds. To address this, we examine the role of context through hypothesis testing. Specifically, we do hypothesis testing of a necessary condition under the null hypothesis – absence of such spurious association – to verify if the additional contextual effects indeed exist using real-world data. 

This leads us to first formulate the following hypothesis about the structural role of context and then offer a formal empirical test in the following section.

\begin{hypothesis} [Structural Hypothesis: Direct Effect of Context on Time-to-recidivism]
\begin{align*}
    H_0&: \text{context $U$ does not directly affect time-to-recidivism $\tau$ and time-to-custody $\tau'$} \\
    &\quad\quad\quad\quad\quad\quad\quad\quad\quad\quad VS. \\
    H_1&: \text{context $U$ directly affects time-to-recidivism $\tau$ and time-to-custody $\tau'$}
\end{align*}
\end{hypothesis}

Under $H_0$, according to the causal DAG in Fig. \ref{fig:dag}, we can represent the causal quantity $\p^{\doop{D=d}}[ \tau>t| M=m]$ as follows:

\begin{theorem}\label{lemma:calculation}
Under $H_0$ that the context $U$ does not directly affect time-to-recidivism $\tau$ and time-to-custody $\tau'$, $\forall t>0, m \in \{\text{low, medium, high}\}$,
\begin{align}
\p^{\doop{D=d}}[ \tau>t | M=m] &= \p[ \tau>t| D=d, M=m]=\p[ \tau>t|M=m].\label{eq:do.2}
\end{align}
\end{theorem}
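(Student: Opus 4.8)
The plan is to read the two claimed equalities off the causal DAG in Figure~\ref{fig:dag} using standard do-calculus / back-door reasoning, once the structural hypothesis $H_0$ has been imposed on the graph. First I would describe precisely the graph we are working with under $H_0$: the vertices are $D$, $U$, $M$, $\tau$, $\tau'$, $T$; the dashed bidirectional edge between $D$ and $U$ is represented (as in \cite{pleckoCausalFairnessAnalysis2024}) by a latent confounder $U$ with arrows into $D$; there are arrows $U \to D$, $U \to M$, $D \to M$, $M \to \tau$, $M \to \tau'$, $\tau \to T$, $\tau' \to T$; crucially, $H_0$ deletes the edges $U \to \tau$ and $U \to \tau'$, and the framework already assumes there is no direct edge $D \to \tau$ or $D \to \tau'$. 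So after the intervention $\doop{D=d}$ we obtain the mutilated graph in which all arrows into $D$ are removed.

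Second, I would establish the first equality $\p^{\doop{D=d}}[\tau>t \mid M=m] = \p[\tau>t \mid D=d, M=m]$ via the back-door criterion with adjustment set $\{M\}$ — or more directly, by checking that in the mutilated graph $G_{\overline{D}}$ the only directed paths from $D$ to $\tau$ pass through $M$, so conditioning on $M$ blocks all of them, and there is no remaining confounding between $D$ and $\tau$ once $U\to D$ has been severed (because $U$ no longer points into $\tau$ under $H_0$). Concretely: $\p^{\doop{D=d}}[\tau>t\mid M=m]$ marginalizes the post-intervention joint over $U$; since $\tau$ depends on its parents only through $M$ (its sole parent once $U\to\tau$ is deleted), and the conditional law of $\tau$ given $M=m$ is invariant to whether $M=m$ arose naturally or by setting $D=d$, the do drops out. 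I would write the one-line sum-over-$U$ computation: $\p^{\doop{D=d}}[\tau>t\mid M=m] = \sum_u \p[\tau>t\mid M=m] \, \p^{\doop{D=d}}[U=u\mid M=m] = \p[\tau>t\mid M=m]$, using that $\tau \perp (D,U) \mid M$ in the graph under $H_0$. That same conditional-independence statement, $\tau \perp D \mid M$, is exactly what also yields $\p[\tau>t\mid D=d,M=m] = \p[\tau>t\mid M=m]$ in the observational distribution, giving the second equality simultaneously.

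Third, I would make the d-separation check explicit since it is the crux: in the graph under $H_0$, is $\tau$ d-separated from $D$ given $M$? The candidate connecting paths are (i) $D \to M \to \tau$, blocked by conditioning on the non-collider $M$; (ii) $D \leftarrow U \to M \to \tau$, blocked at $M$; (iii) any path through $\tau'$ or $T$, e.g. $D \to M \to \tau' \to T \leftarrow \tau$, which is blocked because $T$ is a collider that we do not condition on (and has no conditioned descendants). So all paths are blocked and $\tau \perp D \mid M$; the analogous argument gives $\tau \perp U \mid M$ and hence $\tau \perp (D,U)\mid M$.

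I expect the main obstacle to be purely expository rather than mathematical: being careful that (a) the intervention on $D$ does not disturb the mechanism for $\tau$ given $M$ — this is where the "no direct arrow $D \to \tau$" assumption and $H_0$ (no $U \to \tau$) are both essential, and one must state that the structural equation for $\tau$ has only $M$ as an argument; and (b) the collider $T = \min\{\tau,\tau'\}$ does not open a spurious path — one must note we never condition on $T$ (the test later works on curves within an $M=m$ stratum, not within a $T$ stratum). A secondary subtlety is the positivity/overlap assumption needed for the conditionals $\p[\cdot \mid D=d, M=m]$ to be well defined, which I would state as a standing assumption. Beyond that, the proof is a two-line application of the rules of do-calculus (Rule 2 to exchange $\doop{D=d}$ for conditioning given the back-door set $\{M\}$ is satisfied, then Rule 3 / the d-separation above to drop $D$ entirely), so I would keep it short.
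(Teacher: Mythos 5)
Your proposal is correct and follows essentially the same route as the paper: both impose $H_0$ by deleting $U\to\tau$ and $U\to\tau'$, observe that $D$ and $\tau$ are then d-separated by $M$, and convert $\p^{\doop{D=d}}[\tau>t\mid M=m]$ into the observational conditional by marginalizing over $U$. Your version is somewhat more explicit about the justifications the paper leaves implicit (the collider check on $T=\min\{\tau,\tau'\}$, positivity of the conditioning events, and which do-calculus rule licenses replacing the post-intervention conditional with the observational one), but the underlying argument is the same.
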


\begin{proof}
Under $H_0$, we remove the edges $U \to \tau$ and $U \to \tau'$ from Fig. \ref{fig:dag}. Thus, $D$ and $\tau$ are d-separated by $M$, i.e. $D\perp \tau |M$.

Based on the modified DAG and do calculus, we have $\forall d \in \{\majority, \minority\}, t>0, m \in \{\text{low, medium, high}\}$
\begin{align}
   & \p^{\doop{D=d}}[ \tau>t| M=m] \nonumber\\
   =& \sum_u \p^{\doop{D=d}}[ \tau>t, U=u, D=d| M=m]\nonumber\\
   =& \sum_u \p^{\doop{D=d}}[ \tau>t, U=u | D=d, M=m] \cdot 
  \p^{\doop{D=d}}[D=d | M=m]\label{eq:products}\\
  =& \sum_u \p[ \tau>t, U=u | D=d, M=m]\label{eq:one}\\
  =& \p[ \tau>t| D=d, M=m] \label{eq:conditional}\\
  =& \p[ \tau>t| M=m] \label{eq:final}
\end{align}
where \eqref{eq:one} is obtained from \eqref{eq:products} due to $\p^{\doop{D=d}}[D=d | M=m]=1$ and \eqref{eq:final} is obtained from \eqref{eq:conditional} due to $D\perp \tau |M$.
\end{proof}

\begin{corollary}\label{cor:1}
    Under $H_0$ that the context $U$ does not directly affect time-to-recidivism $\tau$ and time-to-custody $\tau'$, $\forall t>0, m \in \{\text{low, medium, high}\}$, we have counterfactual racial parity, i.e. $\p^{\doop{D=\majority}}[ \tau>t | M=m] = \p^{\doop{D=\minority}}[ \tau>t | M=m]$.
\end{corollary}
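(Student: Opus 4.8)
The plan is to derive Corollary~\ref{cor:1} as an immediate consequence of Theorem~\ref{lemma:calculation}, since the theorem already does essentially all the work. The key observation is that equation~\eqref{eq:do.2} expresses the interventional quantity $\p^{\doop{D=d}}[\tau>t\mid M=m]$ in a form whose right-hand side, $\p[\tau>t\mid M=m]$, no longer depends on $d$ at all. Thus the plan is: first, instantiate Theorem~\ref{lemma:calculation} at $d=\majority$; second, instantiate it at $d=\minority$; third, equate the two through the common right-hand side.

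Concretely, I would write: fix any $t>0$ and any $m\in\{\text{low, medium, high}\}$. Applying Theorem~\ref{lemma:calculation} with $d=\majority$ gives $\p^{\doop{D=\majority}}[\tau>t\mid M=m]=\p[\tau>t\mid M=m]$. Applying it again with $d=\minority$ gives $\p^{\doop{D=\minority}}[\tau>t\mid M=m]=\p[\tau>t\mid M=m]$. Since both interventional survival probabilities equal the same conditional probability $\p[\tau>t\mid M=m]$, which is defined purely from the observational distribution and carries no dependence on the race label, transitivity of equality yields $\p^{\doop{D=\majority}}[\tau>t\mid M=m]=\p^{\doop{D=\minority}}[\tau>t\mid M=m]$, which is exactly the statement of counterfactual racial parity from Definition~\ref{def:1}. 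Because $t$ and $m$ were arbitrary, the equality holds for all $t>0$ and all $m$, completing the argument.

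There is essentially no obstacle here: the entire content lives in Theorem~\ref{lemma:calculation}, whose proof already carried out the delicate do-calculus manipulations (the reduction of the intervention to a conditional via $\p^{\doop{D=d}}[D=d\mid M=m]=1$, and the use of the d-separation $D\perp\tau\mid M$ obtained after deleting the edges $U\to\tau$ and $U\to\tau'$ under $H_0$). If I were to flag anything, it would be a presentational point rather than a mathematical one: one should make explicit that the quantity being compared in Definition~\ref{def:1} is precisely $\p^{\doop{D=d}}[\tau>t\mid M=m]$, so that the conclusion of the corollary is literally the definitional equality, and note that the ``$\forall u$'' quantifier in Definition~\ref{def:1} is vacuous for this marginalized quantity since the right-hand side $\p[\tau>t\mid M=m]$ has already been summed over $U$. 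With that remark in place, the proof is a two-line corollary.

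\begin{proof}
Fix any $t>0$ and any $m\in\{\text{low, medium, high}\}$. By Theorem~\ref{lemma:calculation} applied with $d=\majority$,
\[
\p^{\doop{D=\majority}}[\tau>t\mid M=m]=\p[\tau>t\mid M=m],
\]
and by Theorem~\ref{lemma:calculation} applied with $d=\minority$,
\[
\p^{\doop{D=\minority}}[\tau>t\mid M=m]=\p[\tau>t\mid M=m].
\]
The right-hand sides coincide and do not depend on the race label, so
\[
\p^{\doop{D=\majority}}[\tau>t\mid M=m]=\p^{\doop{D=\minority}}[\tau>t\mid M=m].
\]
Since $t>0$ and $m$ were arbitrary, this is exactly the counterfactual racial parity of Definition~\ref{def:1} (the quantifier over $u$ being vacuous here, as the displayed quantities are already marginalized over $U$).
\end{proof}
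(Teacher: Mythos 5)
Your proposal is correct and matches the paper's own proof exactly: both instantiate Theorem~\ref{lemma:calculation} at $d=\majority$ and $d=\minority$ and equate the two through the common right-hand side $\p[\tau>t\mid M=m]$. Your additional remark about the vacuous quantifier over $u$ is a fair presentational observation but does not change the argument.
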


\begin{proof}
    Since Theorem \ref{lemma:calculation} holds for $\forall d \in \{\majority, \minority\}$, we have $\p^{\doop{D=\majority}}[ \tau>t | M=m] = \p[ \tau>t| M=m] = \p^{\doop{D=\minority}}[ \tau>t | M=m]$.
\end{proof}

\textbf{Key Implication.}
Theorem \ref{lemma:calculation} shows that, under the system structures encoded in the causal DAG and null hypothesis $H_0$, the causal quantity no-recidivism probability $\p^{\doop{D=d}}[ \tau>t | M=m]$—which reflects an intervention on race—can be expressed directly in terms of a statistical quantity $\p[ \tau>t| D=d, M=m]$. This quantity can be further reduced to $\p[ \tau>t| D=d]$ since $D$ is independent of $\tau$ conditioning on $M$ under $H_0$. The intuition behind this and Corollary \ref{cor:1} is that since the algorithm-based criminal justice system decision is fully informative, when risk scores fully explain the disparities and additional contextual factors have no direct effect on recidivism timing, controlling for algorithmic decisions alone should ensure counterfactual racial parity.

Moreover, the contrapositive argument of Theorem \ref{lemma:calculation} leads to a practical test: if we observe different recidivism patterns across racial groups within the same algorithmic decision category, we can reject $H_0$, which implies the sufficiency of algorithmic scores alone to explain the observed disparities and the absence of direct impact of additional contextual factors on time-to-recidivism or time-to-custody. We state it formally below.

\begin{lemma} \label{lemma:test1}
    $\forall t>0, m \in \{\text{low, medium, high}\}$, if $\p[ \tau>t| D=\majority, M=m] \neq \p[ \tau>t| D=\minority, M=m]$ at significance level $\alpha$, then we reject the null hypothesis $H_0$ that the context $U$ does not directly affect time-to-recidivism $\tau$ at the $1-\alpha$ confidence level.
\end{lemma}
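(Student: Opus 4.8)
The plan is to obtain the lemma as the contrapositive of Theorem~\ref{lemma:calculation}, made operational by invoking a two-sample test that stays valid under right censoring. The logical skeleton is: the \emph{structural null} $H_0$ implies the \emph{statistical null} ``the two race-specific survival curves of $\tau$ within the stratum $M=m$ coincide''; hence statistical evidence against the latter is evidence against the former, and a level-$\alpha$ rejection of the statistical null is a level-$\alpha$ rejection of $H_0$.

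First I would fix $m\in\{\text{low, medium, high}\}$ and abbreviate $S_d(t):=\p[\tau>t\mid D=d, M=m]$ for $d\in\{\majority,\minority\}$. Applying Theorem~\ref{lemma:calculation} with $d=\majority$ and then with $d=\minority$ yields, under $H_0$ and for every $t>0$,
\[
S_{\majority}(t)=\p[\tau>t\mid M=m]=S_{\minority}(t),
\]
so $H_0$ entails the statistical null $H_0^{(m)}:\,S_{\majority}\equiv S_{\minority}$. Contrapositively, if $S_{\majority}(t)\neq S_{\minority}(t)$ for some $t>0$, then $H_0$ fails; tracing which edge of Figure~\ref{fig:dag} must then be present identifies the failure as a direct effect of $U$ on $\tau$ (and/or $\tau'$), which is exactly the negation of $H_0$.

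Second I would argue that $H_0^{(m)}$ is testable from the observed data $(T,\mathbf 1\{\tau\le\tau'\},D,M)$ with controlled type-I error. For a fixed $t$ one compares the Kaplan--Meier estimates $\widehat S_{\majority}(t)$ and $\widehat S_{\minority}(t)$ restricted to the sub-cohort $\{M=m\}$ using Greenwood's variance; for the whole curve one uses the log-rank test, whose statistic is asymptotically $\chi^2_1$ under $H_0^{(m)}$. In either case, rejecting when the statistic exceeds its $(1-\alpha)$-quantile is a level-$\alpha$ test of $H_0^{(m)}$. The prerequisite is non-informative censoring within each group, $\tau\perp\tau'\mid D=d, M=m$; I would verify this from the DAG, using that under $H_0$ the edges $U\to\tau$ and $U\to\tau'$ are absent so $\tau,\tau'$ have no common ancestor besides $M$, and that the collider $T=\min\{\tau,\tau'\}$ is \emph{not} in the conditioning set, so no spurious path is opened through it. Chaining with the first step gives the lemma; if the test is run in several strata $m$, I would add a multiplicity correction (e.g.\ Bonferroni over $m$) so the family-wise level stays $\alpha$.

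The main obstacle, and the step deserving the most care, is the second one: certifying that the censored two-sample test genuinely controls the error rate here. Concretely this means (a) confirming that $\tau\perp\tau'\mid D,M$ is really entailed by Figure~\ref{fig:dag} under $H_0$ (watching the collider $T$ and any residual association path through $U$), so that the log-rank/Greenwood nulls coincide with $H_0^{(m)}$; and (b) being explicit that the log-rank and Greenwood guarantees are asymptotic, so the stated ``$1-\alpha$ confidence level'' is an asymptotic claim---if an exact finite-sample statement is desired, one should replace the log-rank statistic by its permutation version, which is exactly level $\alpha$ under $H_0^{(m)}$. Everything else is a direct appeal to Theorem~\ref{lemma:calculation} together with elementary propositional logic.
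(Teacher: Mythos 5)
Your proposal is correct and follows essentially the same route as the paper: the lemma is obtained as the contrapositive of Theorem~\ref{lemma:calculation} (which under $H_0$ forces $\p[\tau>t\mid D=\majority,M=m]=\p[\tau>t\mid M=m]=\p[\tau>t\mid D=\minority,M=m]$), and the censoring-aware operationalization you describe is exactly what the paper defers to its Empirical Test subsection via the log-rank test with the independent-censoring assumption. Your added care about the asymptotic nature of the level-$\alpha$ guarantee and multiplicity over strata goes slightly beyond what the paper states, but does not change the argument.
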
 

Lemma \ref{lemma:test1} lets us conclude whether the contextual factors directly affect time-to-recidivism when we see different no-recidivism curves for different races in the same algorithmic risk assessment decision groups. We explain below how to perform such an evaluation when the actual time-to-recidivism can be masked due to censoring.

\subsection{Empirical Test}\label{sec:empirical_test}
One potential challenge in directly using Theorem \ref{lemma:calculation} and Lemma \ref{lemma:test1} is that we often cannot observe the true time-to-recidivism $\tau$ for all individuals, only a lower bound of $\tau$. This occurs because some individuals return to custody for non-criminal violations, like missing probation meetings, before any potential reoffense - a phenomenon known as censoring in survival analysis. Traditional statistical tests that ignore censoring could produce biased results, as mistakenly using time-to-custody shall underestimate the true time-to-recidivism. Survival analysis methods are specifically designed to handle such censored data by properly accounting for both observed recidivism events and censored observations, thereby allowing us to decide if we have enough empirical evidence to reject the null hypothesis $H_0$ or not.

To put it formally, we reformulate Lemma \ref{lemma:test1} in the form of the following empirical test.

\begin{test}
    Let $S_{d}(t|m):=\mathbb{P}[\tau>t|D=d, M=m]\ \ \forall d \in \{\majority, \minority\}$. Then $\forall m \in \{\text{low, medium, high}\}$,
\begin{align*}\label{eq:null.hyp}
    \hat{H}_0(m) & : \{S_{\majority}(t|m) = S_\minority(t|m) \mid t >0 \}\\
    &\quad\quad\quad\quad\text{ VS. }\\
    \hat{H}_1(m) & : \{S_\majority(t|m) \neq S_\minority(t|m) \mid t > 0\}
\end{align*}  
\end{test}
Under the null hypothesis, individuals of different races but the same algorithmic risk score group should have identical survival curves - that is, their probability of remaining arrest-free should be the same at all time points. To test this hypothesis while properly accounting for censoring, we employ the non-parametric log-rank test, which compares the entire survival curve rather than outcomes at a single time point.

\textbf{Assumptions of Log-rank Test.} The validity of the log-rank test relies on three key assumptions under $\hat{H}_0(m)$:
\begin{itemize}
    \item Proportional hazards: The ratio of hazard rates between individuals of different races in the same risk score group remains constant over time, i.e., $\frac{\partial \log S_\majority(t|m)}{\partial t} = \frac{\partial \log S_\minority(t|m)}{\partial t}$.
    \item Independent censoring: Returning to custody for non-criminal violations is unrelated to the likelihood of recidivism occurring given the risk assessment group, i.e., $\tau \perp \tau' \mid M=m$.
    \item Independent recidivism: Individual recidivism events occur independently.
\end{itemize}

\textbf{Converting Recidivism Data to Survival Data Format.} For each individual $i$, the observational recidivism data collects their algorithmic risk score group $M_i$, $race$ $D_i$, observed time $T_i$, whether re-arrest occurs $\Delta_i$. $\forall m \in \{\text{low, medium, high}\}$, we track four key quantities at each time point $t$ which can be converted from observational data:
\begin{itemize}
    \item $O_{d,m,t}$: Number of observed re-arrests of race $d$ in risk assessment group $m$ at time $t$, i.e. $|\{i| D_i=d, M_i = m, T_i =t, \Delta_i = 1\}|$
    \item $N_{d,m,t}$: Number of individuals of race $d$ in risk assessment group $m$ who have not been rearrested or returned to custody at time $t$, i.e. $|\{i| D_i=d, M_i = m, T_i \geq t\}|$
    \item $O_{m, t}$: Total number of observed re-arrests of both races in risk assessment group $m$ at time $t$,  i.e. $|\{i| D_i=d, T_i =t, \Delta_i = 1\}|$
    \item $N_{m,t}$: Total number of individuals of both races in risk assessment group $m$ who have not been rearrested or returned to custody at time $t$, i.e. $|\{i| D_i=d, T_i \geq t\}|$
\end{itemize}
The expected number of re-arrests for each race $d$ in risk assessment group $m$ at time $t$, assuming $\hat{H}_0(m)$, is calculated as $E_{d, m, t}=O_{m, t}\cdot \frac{N_{d,m,t}}{N_{ m, t}}$. The observed re-arrests, $O_{d, m}$, and expected rearrests, $E_{m, t}$, for each race $d$ in risk assessment group $m$ are then aggregated over all event times as $O_{d, m}=\sum_t O_{d,m, t}, E_{d, m}=\sum_t E_{d, m, t}$ respectively.

\textbf{Test Statistic.}
The log-rank test statistic compares the observed and expected re-arrests:
$$\chi^2 = \frac{(O_{\majority, m}-E_{\majority, m})^2}{Var(O_{\majority, m}-E_{\majority, m})}$$
where $Var(O_{\majority, m}-E_{\majority, m})=\sum_t \frac{N_{\majority,m,t}N_{\minority,m,t}O_{m,t}(N_{m, t}-O_{m,t})}{N_{m,t}^2(N_{m,t}-1)}$.

The test statistic $\chi^2$ follows a chi-square distribution under the null hypothesis of one degree of freedom. Statistical significance is determined by calculating the corresponding p-value. If p-value < 0.05, we find enough evidence supporting the recidivism curves across racial groups are significantly different from each other, thus rejecting the null hypothesis that the risk scores are sufficient to explain the observed disparities and additional contextual factors do not directly affect recidivism; if p-value $\geq$ 0.05,  we do not find sufficient evidence supporting the recidivism curves across racial groups are significantly different from each other, thus failing to reject the null hypothesis that additional contextual factors do not directly affect recidivism.

\section{Empirics}

Having developed a causal framework and a format empirical test for analyzing racial disparities in recidivism, in this section, we use the COMPAS dataset collected by ProPublica to evaluate the extent to which the observed disparities can be explained by algorithmic risk scores alone and the role of additional contextual factors in our framework. At its core, we hope to evaluate to what extent do observed racial disparities in recidivism stem from algorithmic bias versus broader contextual factors? Our causal framework suggests that if disparities persist even after controlling for algorithmic risk scores, this would indicate the presence of additional unmeasured influences on recidivism trajectories.
Specifically, we apply the empirical test developed in Section \ref{sec:empirical_test} to examine whether and when racial disparities emerge in time-to-recidivism patterns. This allows us to assess not just the existence of contextual effects, but also their temporal dynamics - whether disparities appear immediately post-release or develop over longer follow-up periods. Such temporal patterns can provide insight into how structural inequalities may compound over time to shape recidivism outcomes.

\subsection{Data Description}
The COMPAS dataset, curated by ProPublica in 2016, contains demographic characteristics, criminal history, and COMPAS risk scores of around 10,000 criminal defendants in Broward County, Florida. This dataset includes all individuals who underwent COMPAS assessments at the pretrial stage in 2013 and 2014 and had public criminal records up to April 1, 2016 for tracking subsequent offenses. Following the suit of prior analyses in literature, we consider \textit{Caucasion} as the \textit{majority} race and \textit{African-American} as the \textit{minority} race.

We preprocess the dataset to exclude cases with missing key variables, such as recidivism status or risk scores. Additionally, the COMPAS risk scores are categorized into three levels—\textit{low} (1-4), \textit{medium} (5-7), and \textit{high} (8-10) —representing perceived recidivism risk, which serves as a proxy for algorithm-based criminal justice system decisions. We also distinguish between two key outcomes: rearrest for criminal offenses (the primary recidivism event) and return to custody for non-criminal violations (treated as censoring events in our analysis).

It is important to note that the COMPAS dataset, while widely used, has limitations inherent to criminal justice data. These include potential sampling biases, variations in law enforcement practices, and the absence of certain contextual factors such as socioeconomic status or access to community support. Additionally, the COMPAS dataset reflects only a specific jurisdiction—Broward County, Florida—which may limit generalizability to other regions with differing criminal justice practices.

\subsection{Results}

\begin{figure}[h!]
    \centering
    \begin{subfigure}[b]{0.8\linewidth}
        \centering
        \includegraphics[width=\linewidth]{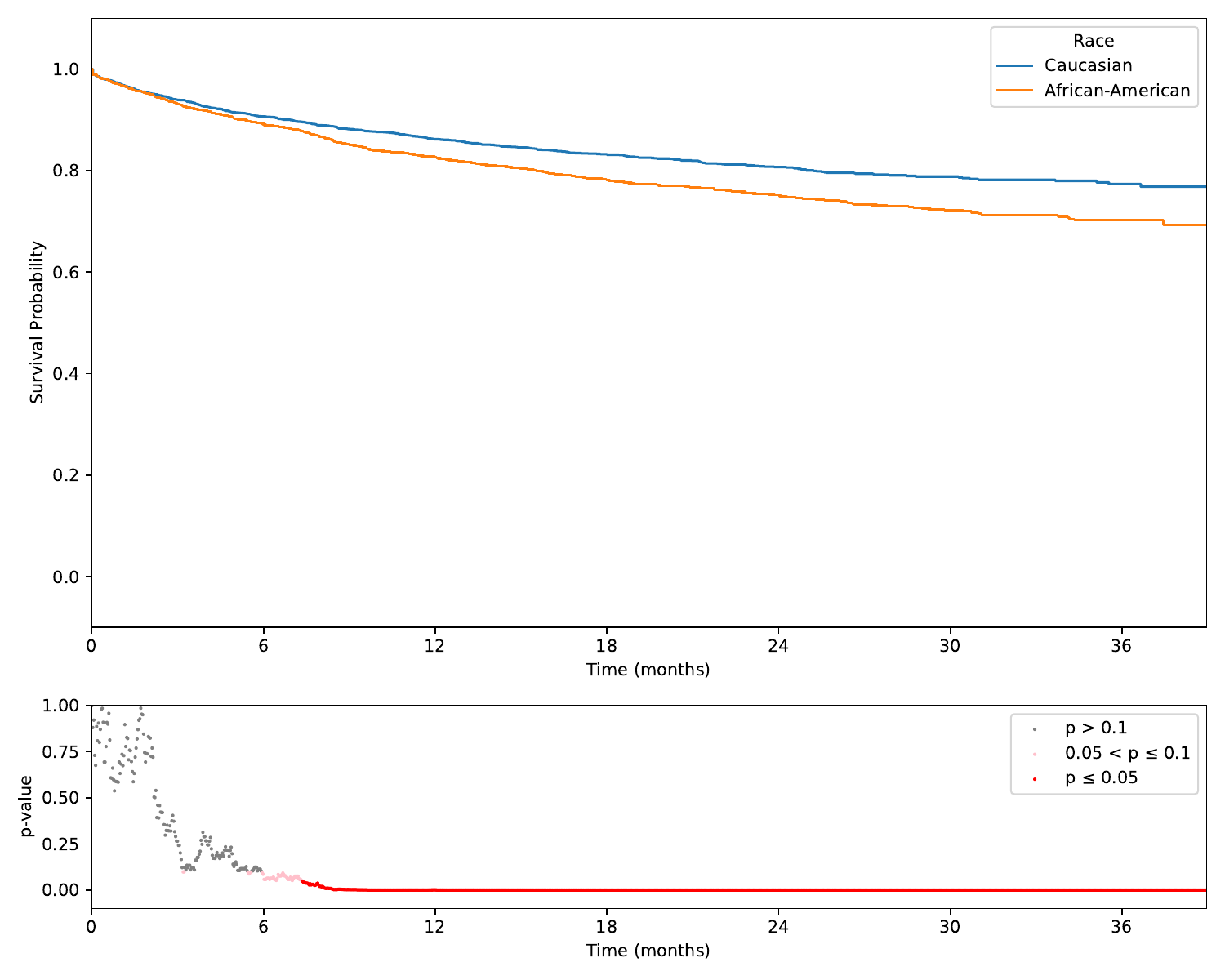}
        \caption{Survival analysis of recidivism patterns across defendants in low risk group.}
        \label{fig:curve1}
    \end{subfigure}
    
    \begin{subfigure}[b]{0.48\linewidth}
        \centering
        \includegraphics[width=\linewidth]{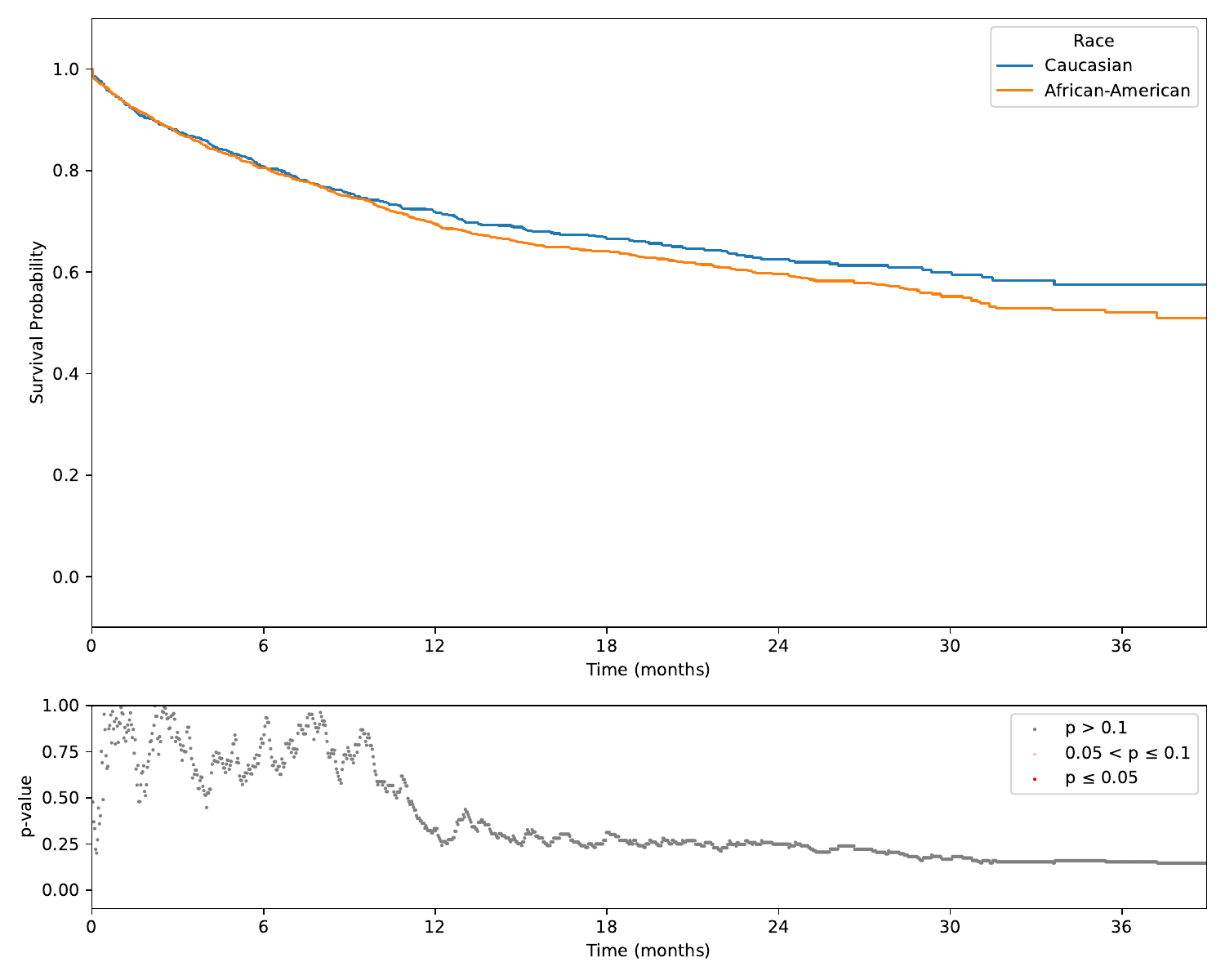}
        \caption{Survival analysis of recidivism patterns across defendants in medium risk group.}
        \label{fig:curve2}
    \end{subfigure}
    \hfill
    \begin{subfigure}[b]{0.48\linewidth}
        \centering
        \includegraphics[width=\linewidth]{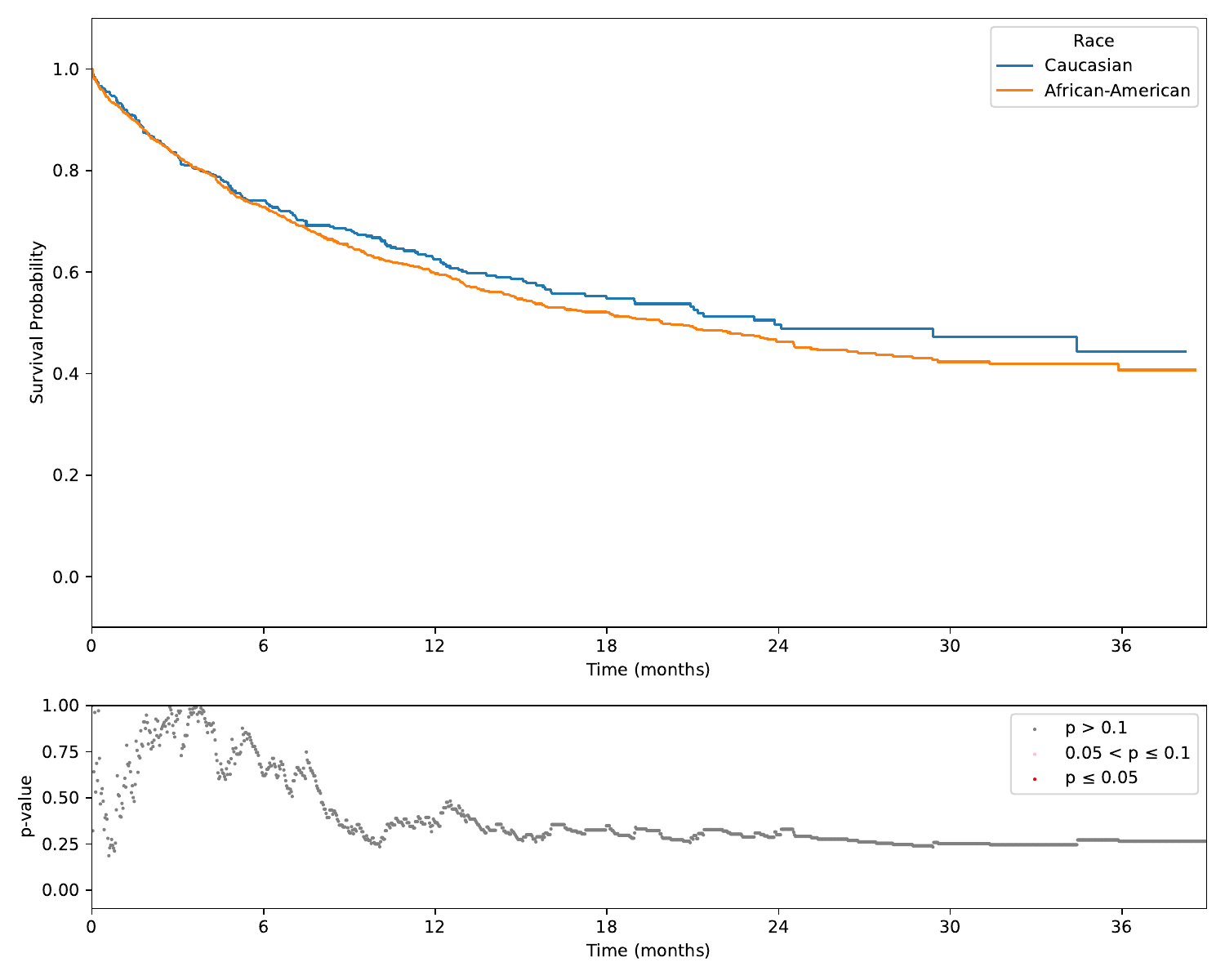}
        \caption{Survival analysis of recidivism patterns across defendants in high risk group.}
        \label{fig:pvalues}
    \end{subfigure}
    \caption{Survival analysis of recidivism patterns across racial groups and COMPAS recidivism risk groups. The subplots display survival curves and statistical significance analysis: (a) survival curves for Caucasian defendants, (b) survival curves for African-American defendants, and (c) corresponding p-values from log-rank tests over time. Gray ($p > 0.1$) indicates insufficient evidence of racial differences, light pink ($0.05 < p \leq 0.1$) indicates marginal differences, and red ($p \leq 0.05$) indicates significant differences.}
    \label{fig:results}
\end{figure}

\begin{figure}[h]
    \centering
    \begin{subfigure}[b]{0.8\linewidth}
        \centering
        \includegraphics[width=\linewidth]{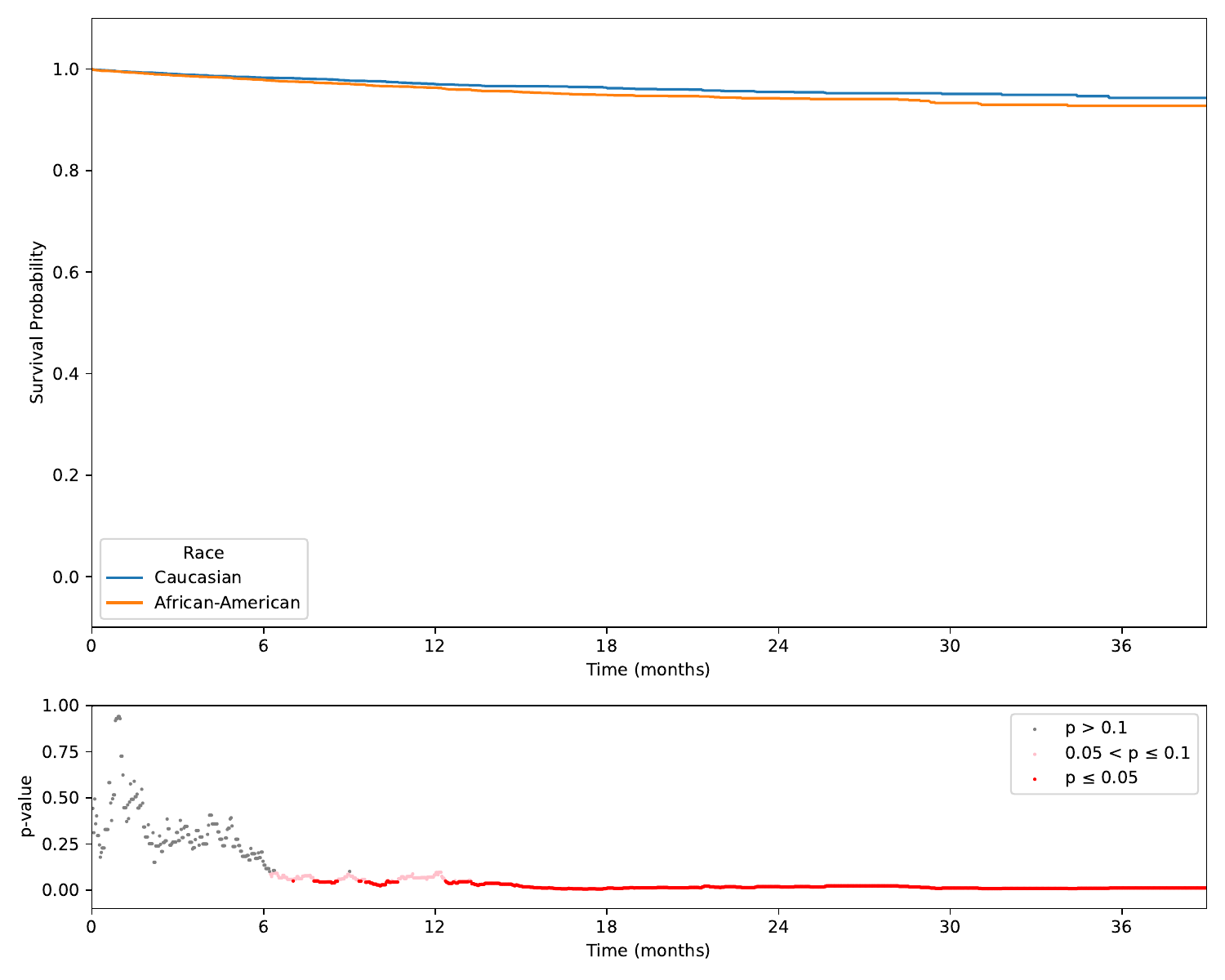}
        \caption{Survival analysis of recidivism patterns across defendants in low risk group.}
        \label{fig:v_curve1}
    \end{subfigure}
    
    \begin{subfigure}[b]{0.48\linewidth}
        \centering
        \includegraphics[width=\linewidth]{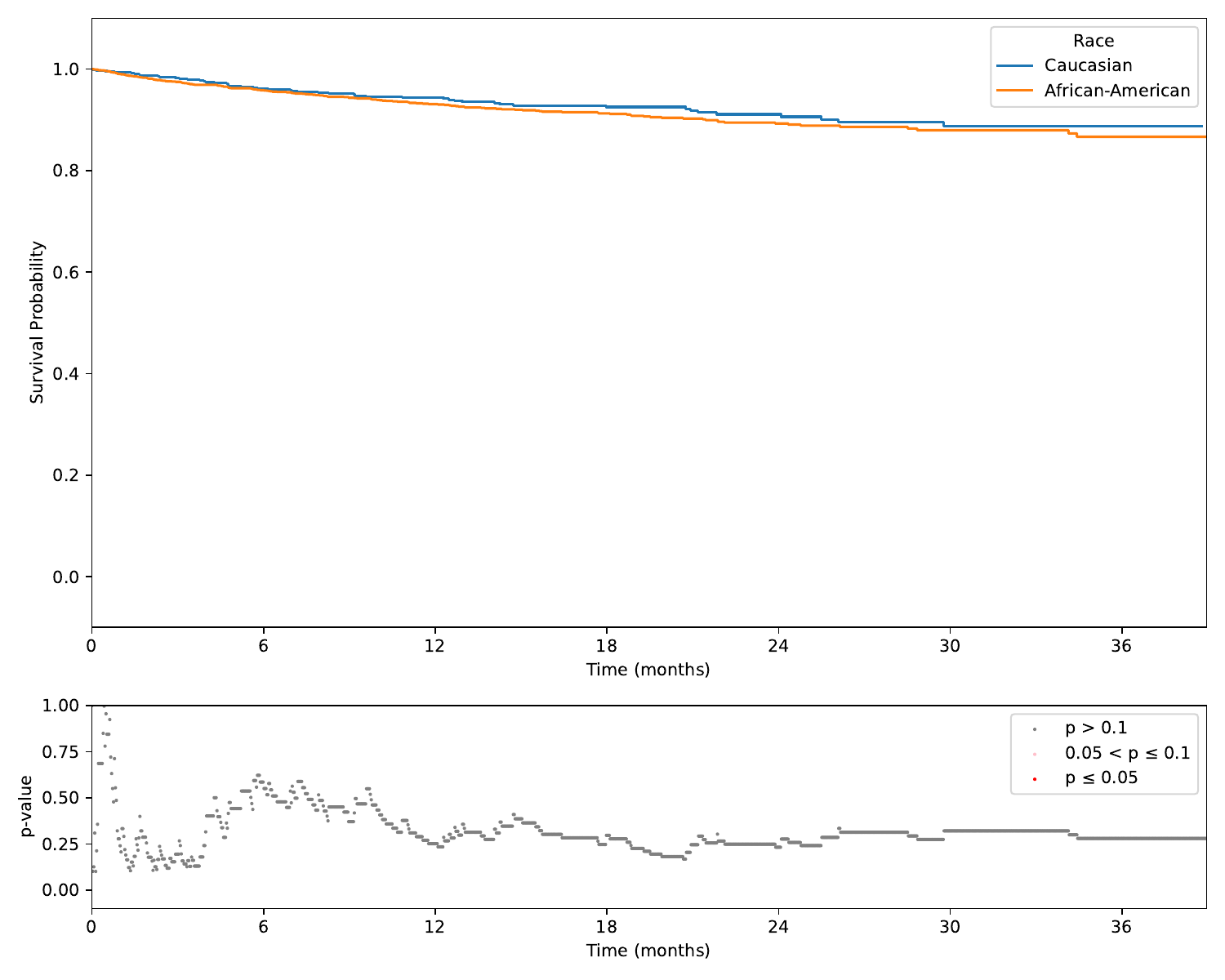}
        \caption{Survival analysis of recidivism patterns across defendants in medium risk group.}
        \label{fig:v_curve2}
    \end{subfigure}
    \hfill
    \begin{subfigure}[b]{0.48\linewidth}
        \centering
        \includegraphics[width=\linewidth]{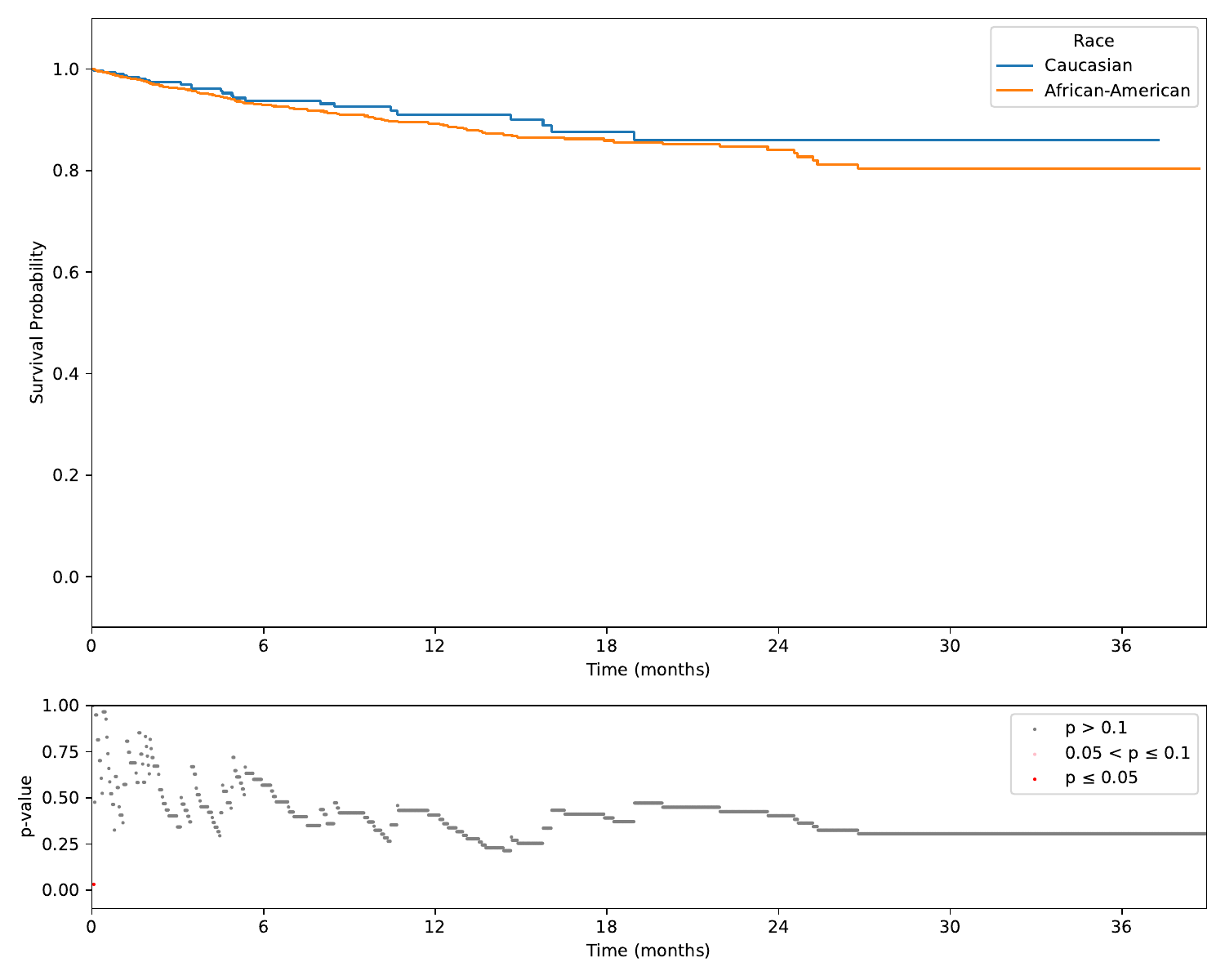}
        \caption{Survival analysis of recidivism patterns across defendants in high risk group.}
        \label{fig:v_pvalues}
    \end{subfigure}
    
    \caption{Survival analysis of recidivism patterns across racial groups and COMPAS \textbf{violent} recidivism risk groups. The subplots display survival curves and statistical significance analysis: (a) survival curves for Caucasian defendants, (b) survival curves for African-American defendants, and (c) corresponding p-values from log-rank tests over time. Gray ($p > 0.1$) indicates insufficient evidence of racial differences, light pink ($0.05 < p \leq 0.1$) indicates marginal differences, and red ($p \leq 0.05$) indicates significant differences.}
    \label{fig:v_results}
\end{figure}

To examine potential racial disparities in recidivism patterns, we conducted survival analyses stratified by COMPAS risk groups. There are two major types of risk scores predicted by the COMPAS algorithm: risk for recidivism and risk for violent recidivism. Figure \ref{fig:results} and Figure \ref{fig:v_results} presents two complementary visualizations for each risk group and type: no-recidivism curves showing the proportion of individuals who have not recidivated over time, and corresponding statistical significance levels from log-rank tests comparing racial groups. Gray-shaded p-values indicate insufficient evidence to distinguish time-to-recidivism patterns between groups; light pink signifies marginal differences (significance level of 0.1), while red indicates significant differences (p-value < 0.05). 



Our analysis reveals distinct temporal patterns across risk categories. For individuals classified as medium or high-risk by either risk of recidivism or risk of violent recidivism, the no-recidivism curves for Caucasian and African-American defendants remain similar throughout the follow-up period. Log-rank tests confirm this observation, showing no statistically significant differences between racial groups ($p > 0.1$). This suggests that within these higher risk categories, the algorithmic risk scores effectively capture recidivism patterns across racial groups.

However, a markedly different pattern emerges among individuals classified as low-risk by either risk of recidivism or risk of violent recidivism. While recidivism trajectories are similar between racial groups within a short follow-up period, significant disparities begin to appear with longer periods approximately seven months of follow-up ($p < 0.05$). Beyond this point, African-American defendants show a faster decline in their no-recidivism probability compared to Caucasian defendants who received identical risk scores.

The log-rank test results provide formal statistical evidence for these observations. For medium and high-risk groups, we fail to reject the null hypothesis that contextual factors has no direct effect on recidivism timing. However, for the low-risk group, we reject this null hypothesis after the seven-month mark, indicating that factors beyond the algorithmic risk assessment significantly influence recidivism patterns.

\subsection{Discussion: Socioeconomic Contextual Influences on Recidivism}
While initial short-term analyses suggest comparable recidivism outcomes across races, disparities become more pronounced over extended follow-up periods, which indicates the growing influence of non-algorithmic factors that the algorithm does not - and perhaps cannot - account for. The fact that disparities emerge most strongly in the low-risk group is especially concerning, as these individuals might otherwise have the highest potential for successful reintegration. 

We argue that one highly plausible source of these non-algorithmic influences is socioeconomic disadvantage, including barriers to long-term housing, food security, and stable employment. This interpretation aligns with findings from \cite{castilloRecidivismBarriersReintegration2024}, who emphasize the critical role of targeted support services in mitigating recidivism risks among disadvantaged groups. The differential impact of societal contexts on minority individuals, particularly concerning access to essential services like housing and employment, reinforces the necessity of contextualizing algorithmic predictions within broader socioeconomic frameworks.

Within the context of racial disparities, it becomes apparent that counterfactual fairness, as defined earlier in this paper, may hold in the short term but falters over longer periods due to cumulative and compounding societal inequalities. The empirical evidence highlights the complex interplay between risk assessment tools and broader structural factors, challenging policymakers to implement comprehensive reforms that extend beyond algorithmic fairness.

\section{Conclusion}

This study presents a comprehensive examination of racial disparities in recidivism through a multi-stage causal framework, focusing on the interactions between algorithmic risk assessments, parole decisions, and broader contextual factors. While prior work has largely focused on static measures of algorithmic bias, our analysis reveals the critical importance of temporal dynamics in understanding and addressing disparities in criminal justice outcomes. Through careful empirical analysis of the COMPAS dataset, we demonstrate that the nature and extent of racial disparities evolve over time in ways that cannot be attributed solely to algorithmic bias. Our findings reveal that while short-term recidivism outcomes may reflect limited racial disparities under similar risk assessments, disparities become significant over longer follow-up periods, particularly for low-risk groups. One possible explanation for such divergence is the compounding impact of societal factors, such as unequal access to housing, employment opportunities, and social support systems. Due to data limitations, in this manuscript we do not have the privilege of directly verifying if the socioeconomic factors are indeed the additional sources of bias. Nonetheless, we are confident that our work will inspire future efforts in data collection and analysis to investigate this important direction.

The implications of our findings are far-reaching. Policymakers and practitioners must recognize that achieving true fairness in recidivism outcomes requires a holistic approach. This includes addressing the socioeconomic determinants that disproportionately impact minority communities. Effective interventions must extend beyond algorithmic refinements to encompass policy changes that promote equitable access to housing, stable employment, and community-based support services. Such efforts are essential to breaking the cycle of recidivism and fostering a more just and equitable criminal justice system.

Our framework has applicability beyond the study of recidivism. It can be extended to be applied to other domains where algorithmic decisions intersect with social inequality. For instance, a similar causal framework can be utilized to evaluate racial bias in loan approval and default. In particular, the approved loan amount can be considered as a proxy for repayment ability, and time to repayment and target event, often masked by time to default. In this scenario, the socioeconomic conditions like employment stability and family circumstances also exert great influence on how soon people repay their loan. This broader applicability suggests the value of examining time-varying disparities across various institutional contexts where algorithmic systems are deployed.

Ultimately, this work underscores that fairness is indeed more than algorithms—it requires sustained attention to the structural conditions that shape long-term outcomes. As society increasingly relies on algorithmic tools for high-stakes decisions, frameworks that can capture these complex temporal and contextual dynamics become essential for advancing both algorithmic fairness and social justice.

\bibliographystyle{ACM-Reference-Format}
\bibliography{sample-base}

\appendix


\newpage
\section{Additional Empirical Results}
We repeat the same empirical analysis for specific COMPAS recidivism risk scores and violent recidivism risk scores, i.e. scores 0 through 9 rather than quantized to $\{low, medium, high\}$. The results are shown in Figure \ref{fig:score_results}
and \ref{fig:v_score_results} respectively.

Our analysis reveals distinct temporal patterns that vary with assigned risk scores. For individuals receiving all risk score except 3 or 4, the no-recidivism curves for Caucasian and African-American defendants remain similar throughout the follow-up period. Log-rank tests confirm this observation, showing no statistically significant differences between racial groups (p > 0.1). This result might also be due to limited data in each risk score.

However, a markedly different pattern emerges among individuals who received recidivism risk score 3 or 4 (in the low risk score group). While recidivism trajectories are initially similar between racial groups, significant disparities begin to appear after approximately seven months of follow-up (p < 0.05). Beyond this point, African-American defendants show a faster decline in their no-recidivism probability compared to Caucasian defendants who were assessed with the same low risk scores.

\begin{figure}[h!]
    \centering
        \includegraphics[width=\linewidth]{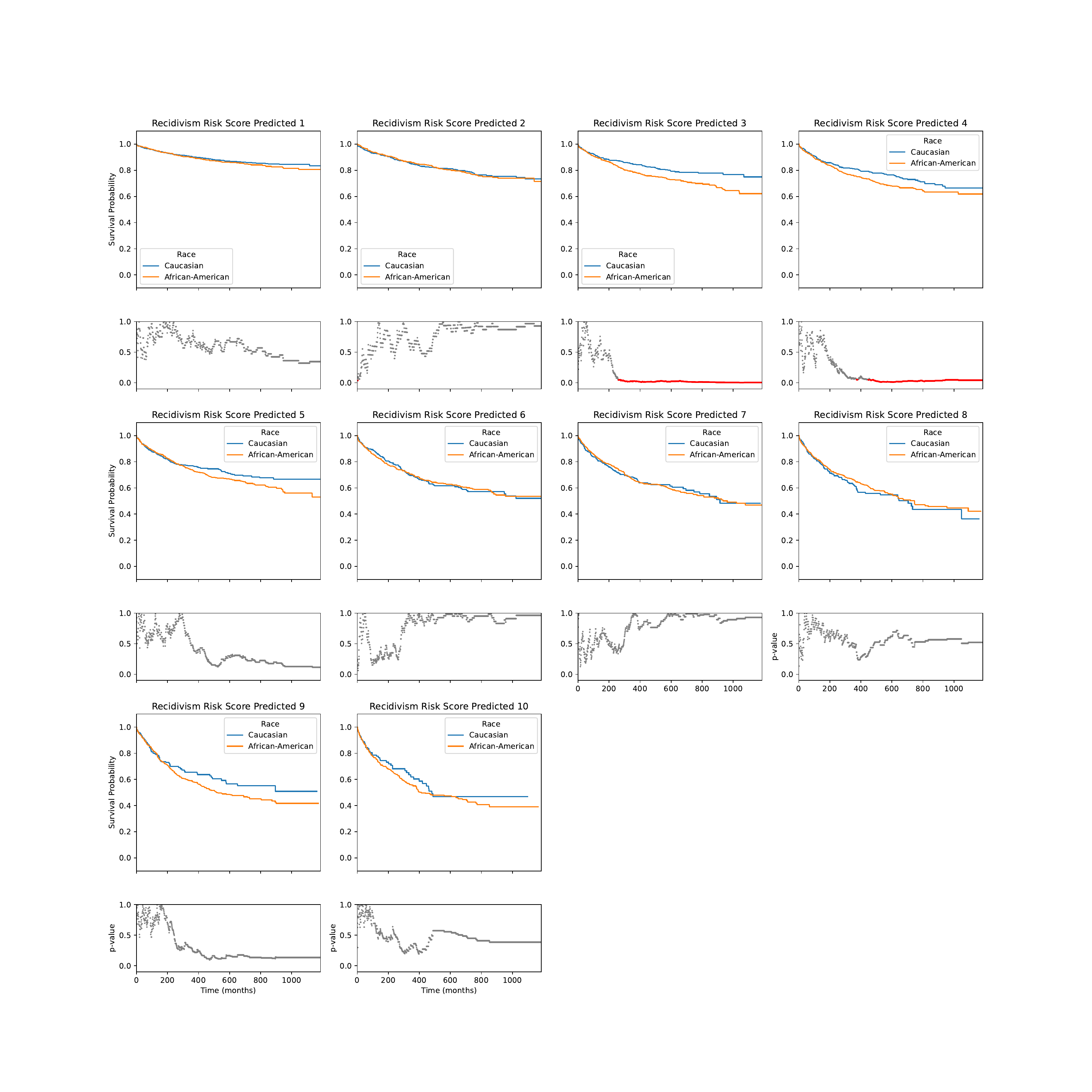}
    \caption{Survival analysis of recidivism patterns across racial scores and COMPAS recidivism risk groups. The subplots display survival curves and statistical significance analysis: (a) survival curves for Caucasian defendants, (b) survival curves for African-American defendants, and (c) corresponding p-values from log-rank tests over time. Gray ($p > 0.1$) indicates insufficient evidence of racial differences, light pink ($0.05 < p \leq 0.1$) indicates marginal differences, and red ($p \leq 0.05$) indicates significant differences.}
    \label{fig:score_results}
\end{figure}

\begin{figure}[h!]
    \centering
        \includegraphics[width=\linewidth]{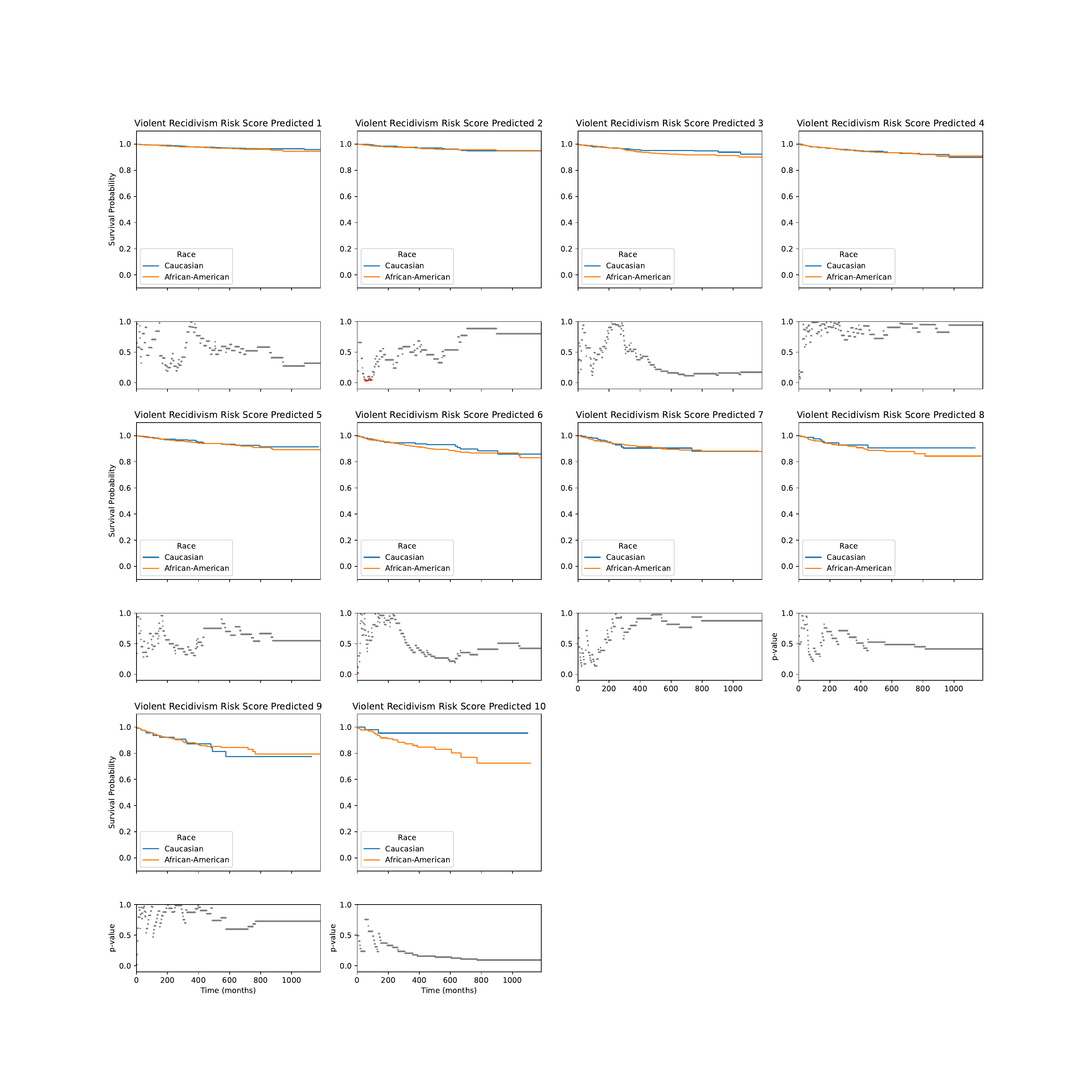}
    \caption{Survival analysis of recidivism patterns across racial scores and COMPAS \textbf{violent} recidivism risk scoress. The subplots display survival curves and statistical significance analysis: (a) survival curves for Caucasian defendants, (b) survival curves for African-American defendants, and (c) corresponding p-values from log-rank tests over time. Gray ($p > 0.1$) indicates insufficient evidence of racial differences, light pink ($0.05 < p \leq 0.1$) indicates marginal differences, and red ($p \leq 0.05$) indicates significant differences.}
    \label{fig:v_score_results}
\end{figure}
\end{document}